\documentclass[preprint,3p]{elsarticle}

\usepackage{amsmath,amssymb,array}
\usepackage{amsthm}
\usepackage{microtype}
\usepackage[utf8]{inputenc}
\usepackage{graphicx}

\usepackage{tikz}
\usetikzlibrary{%
  positioning,
  decorations.pathreplacing,
  decorations.pathmorphing
}


\newcommand{\SQD}[2]{\ensuremath{d_{#1(#2)}}}

\newcommand{\DUO}[2]{\ensuremath{(#1[#2],#1[#2+1])}}

\newtheorem{teo}{Theorem}
\newtheorem{lemma}{Lemma}

\newtheorem{Cl}{Claim}

\theoremstyle{definition}
\newtheorem{Def}{Definition}[section]

\begin{document}

\title{Parameterized Tractability of the Maximum-Duo Preservation
  String Mapping Problem}
\author[itbcnr,disco]{Stefano Beretta\corref{cor1}}
\ead{stefano.beretta@disco.unimib.it}

\author[isegi]{Mauro Castelli}
\ead{mcastelli@novaims.unl.pt}

\author[unibg]{Riccardo Dondi}
\ead{riccardo.dondi@unibg.it}
\cortext[cor1]{Corresponding author}

\address[itbcnr]{Istituto di Tecnologie Biomediche, Consiglio
  Nazionale delle Ricerche, Segrate - Italia}
\address[disco]{Dipartimento di Informatica, Sistemistica e
  Comunicazione, Universit\`a degli Studi di Milano - Bicocca, Milano - Italia}
\address[isegi]{NOVA IMS, Universidade Nova de Lisboa, Lisboa - Portugal}
\address[unibg]{Dipartimento di Scienze Umane e Sociali, Universit\`a degli
  Studi di Bergamo, Bergamo - Italia}

\begin{abstract}
In this paper we investigate the parameterized complexity of the
Maximum-Duo Preservation String Mapping Problem, the complementary of
the Minimum Common String Partition Problem. We show that this problem is
fixed-parameter tractable when parameterized by the number $k$ of
conserved duos, by first giving a parameterized algorithm based on the
color-coding technique and then presenting a reduction to a kernel of
size $O(k^6)$.
\end{abstract}

\begin{keyword}
Computational Biology \sep Common String Partition \sep Parameterized
Algorithms \sep Kernelization
\end{keyword}
\maketitle

\section{Introduction}
Minimum Common String Partition (MCSP) is a problem emerged in the field of
comparative genomics~\cite{DBLP:journals/tcbb/ChenZFNZLJ05} and, in particular,
in the context of ortholog gene
assignments~\cite{DBLP:journals/tcbb/ChenZFNZLJ05}.
Given two strings (genomes) $A$ and $B$, MCSP asks for a partition
of the two strings into a minimum cardinality multiset of identical
substrings. 
The complexity of this problem has been previously studied in literature.
More precisely, the MCSP problem is known to be APX-hard, even when each
symbol has at most $2$ occurrences in each input
string~\cite{DBLP:journals/combinatorics/GoldsteinKZ05}, while it
admits a polynomial time algorithm when each symbol occurs exactly
once in each input string.
Approximation algorithms for this problem have been proposed
in~\cite{DBLP:journals/talg/ChrobakKS05,DBLP:journals/talg/CormodeM07,DBLP:journals/combinatorics/GoldsteinKZ05,DBLP:journals/combinatorics/KolmanW07}.
More precisely, an $O(\log n \log^*n)$-approximation algorithm has been given
in~\cite{DBLP:journals/talg/CormodeM07}, while an $O(k)$-approximation
algorithm, when the number of occurrences of each symbol is bounded by $k$,
has been given in~\cite{DBLP:journals/combinatorics/KolmanW07}.
When $k=2$ and $k=3$ respectively, approximation algorithms of factor $1.1037$
and $4$, respectively, have been given
in~\cite{DBLP:journals/combinatorics/GoldsteinKZ05}.
 
The parameterized complexity~\cite{DowneyFellows,niedermeier} of MCSP has also
been investigated.
First, fixed-parameter algorithms have been given when the problem is
parameterized by two parameters.
In~\cite{DBLP:conf/wabi/Damaschke08} this problem has been shown to be
fixed-parameter tractable, when parameterized by the number of substrings in
the solution and by the repetition number of the input strings.
Then, the MCSP problem has been shown to be fixed-parameter tractable when
parameterized by the number of substrings in the solution and the maximum
number of occurrences of a symbol in an input
string~\cite{DBLP:conf/wabi/BulteauFKR13,DBLP:journals/jco/JiangZZZ12}.
Recently, MCSP has been shown to be fixed-parameter tractable when
parameterized by the single parameter number of substrings of the
partition~\cite{DBLP:conf/soda/BulteauK14}.

Here, we consider the complementary of the MCSP problem, called Maximum-Duo
Preservation String Mapping Problem, where instead of minimizing the number
of identical substrings in the partition, we aim to maximize the number of
preserved duos, that is, the number of adjacencies of symbols that are not
broken by the partition.
This problem has been proposed in~\cite{DBLP:journals/tcs/ChenCSPWT14},
has been shown to be APX-hard when each symbol has at most $2$ occurrences
in each input string~\cite{DBLP:conf/wabi/BoriaKLM14}, and can be approximated
within factor $\frac{1}{4}$~\cite{DBLP:conf/wabi/BoriaKLM14}.

In this work, we study the parameterized complexity of the Maximum-Duo
Preservation String Mapping Problem, where the parameter is the number of
preserved adjacencies (duos).
More precisely, after introducing preliminary definitions and properties
of Maximum-Duo Preservation String Mapping in Section~\ref{sec:prel},
we describe in Section~\ref{sec:ftp1} a fixed-parameter algorithm for the
problem, based on the color-coding technique.
Then, in Section~\ref{sec:kernel}, we present a reduction to a polynomial
kernel of size $O(k^6)$.

The results described in this paper are mainly of theoretical interest,
since a solution of the Maximum-Duo Preservation String Mapping Problem
is expected to preserve many adjacencies.
However, the fixed-parameter algorithms we propose can be of interests for
describing the whole parameterized complexity status of the Minimum Common
String Partition Problem and its variants.
For example, while it is still unknown whether Minimum Common String Partition
Problem admits a polynomial kernel, the result in Section~\ref{sec:kernel}
shows that such a kernel exists for the complementary problem.

\section{Preliminaries}
\label{sec:prel}
In this section, we introduce some concepts that will be used in the rest of
the paper and we give the formal definition of the Maximum-Duo Preservation
String Mapping Problem.
Fig.~\ref{fig:example} illustrates some of the definitions we give in this
section.

Let $\Sigma$ be a non-empty finite set of symbols.
Given a string $A$ over $\Sigma$, we denote by $|A|$ the length of $A$
and by $A[i]$, with $1\leq i\leq |A|$ the symbol of $A$ at position $i$.
Moreover, we denote by $A[i,j]$, with $1\leq i < j \leq |A|$, the substring of
$A$ starting at position $i$ and ending at position $j$.
Given a string $A$, a \emph{duo} is an ordered pair of consecutive elements
\DUO{A}{i}.
Consider a duo \DUO{A}{i} in a string $A$ and a duo \DUO{B}{j} in a string $B$;
they are \emph{preservable} if $A[i]=B[j]$ and $A[i+1]=B[j+1]$.

Given two strings $A$ and $B$, such that $B$ is a permutation of $A$,
we say that $A$ and $B$ are \emph{related}. In the rest of the paper we assume
that $|A|=|B|=n$.

Given two related strings $A$ and $B$, a \emph{mapping} $m$ of $A$ into $B$ is
a bijective function from the positions of $A$ to the positions of $B$ such
that $m(i)=j$ implies that $A[i]=B[j]$, i.e. the two positions $i$, $j$ of the
two strings contain the same symbol.
A \emph{partial mapping} $m$ of $A$ into $B$ is a bijective function from a
subset of positions of $A$ to a subset of positions of $B$ such that $m(i)=j$
implies that $A[i]=B[j]$.
The definition of mapping and partial mapping can be extended to two sets of
duos of related  strings $A$ and $B$, that is if positions $i$ and $i+1$, with
$1 \leq i \leq n-1$, are mapped into positions $j$ and $j+1$, with
$1 \leq j \leq n-1$, we say that duo \DUO{A}{i} is mapped into duo \DUO{B}{j}.

Given two related strings $A$ and $B$, and a mapping $m$ of the positions of
$A$ into the positions of $B$, a duo \DUO{A}{i} is preserved if $m(i)=j$ and
$m(i+1)=j+1$ (see Figure~\ref{fig:example} for an example).

\begin{figure}[ht]
\centering
\begin{tikzpicture}
\node(A) {$A=$};
\node[right = 1mm of A] (A1) {$a$};
\node[right = 1mm of A1] (A2) {$b$};
\node[right = 1mm of A2] (A3) {$c$};
\node[right = 1mm of A3] (A4) {$a$};
\node[right = 1mm of A4] (A5) {$b$};
\node[right = 1mm of A5] (A6) {$b$};
\node[right = 1mm of A6] (A7) {$c$};

\node[below = 10mm of A] (B) {$B=$};
\node[right = 1mm of B] (B1) {$a$};
\node[right = 1mm of B1] (B2) {$c$};
\node[right = 1mm of B2] (B3) {$b$};
\node[right = 1mm of B3] (B4) {$b$};
\node[right = 1mm of B4] (B5) {$c$};
\node[right = 1mm of B5] (B6) {$a$};
\node[right = 1mm of B6] (B7) {$b$};

\draw[decorate,decoration={brace, raise=1.7mm, mirror, amplitude=2mm},
thick] (A1.west) -- (A2.east) node[midway, yshift={-2mm}] (A12){};
\draw[decorate,decoration={brace, raise=1.7mm, mirror, amplitude=2mm},
thick] (A5.west) -- (A7.east) node[midway, yshift={-2mm}] (A567){};
\draw[decorate,decoration={brace, raise=1.7mm, amplitude=2mm},
thick] (B6.west) -- (B7.east) node[midway, yshift={2mm}] (B67){};
\draw[decorate,decoration={brace, raise=1.7mm, amplitude=2mm},
thick] (B3.west) -- (B5.east) node[midway, yshift={2mm}] (B345){};;

\draw (A12) edge[out=270, in=90, -, thick] (B67);
\draw (A567) edge[out=270, in=90, -, thick] (B345);
\draw (A3) edge[out=270, in=90, -, thick] (B2);
\draw (A4) edge[out=270, in=90, -, thick] (B1);
\end{tikzpicture}

\caption{An example of two related strings $A$ and $B$.
The mapping of their positions is represented by connecting
positions/substrings.
Position $1$ and $2$ of $A$ are mapped into positions $6$ and $7$ of $B$, hence
duo $(A[1],A[2])$  of $A$ is preserved; position $1$ in $A$ induces duo
$(A[1],A[2])$.
Similarly, the sequence \SQD{A}{5,7} of consecutive duos is mapped into the
sequence \SQD{B}{3,5} of consecutive duos; hence duos $(A[5],A[6])$,
$(A[6],A[7])$ of $A$ are preserved.
The number of preserved duos induced by the mapping is $3$.}
\label{fig:example}
\end{figure}
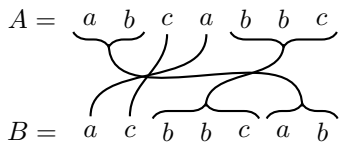

Now, we give the definition of the Maximum-Duo Preservation String Mapping
Problem (in its decision version).

\vspace{1em}
\noindent

{\bf Maximum-Duo Preservation String Mapping Problem (Max-Duo PSM)}

\noindent
\emph{Input:} two related strings $A$ and $B$, an integer $k$.\\
\emph{Output:} is there a mapping $m$ of $A$ into $B$ such that the number of
preserved duos is at least $k$?\\

In this paper, we focus on the parameterized complexity of Max-Duo PSM, when
parameterized by the number $k$ of preserved duos.

Consider a string $S$, with $S \in \{ A,B \}$, and a string $\bar{S}
\in \{ A,B \} \setminus \{ S \}$. 
Given two positions $1 \leq i < j \leq n$, we denote by \SQD{S}{i,j} the
sequence of \emph{consecutive duos} \DUO{S}{i}, \dots , $(S[j-1],S[j])$;
the length of  \SQD{S}{i,j} is the number $j-i$ of consecutive duos in it.
Given the sequence \SQD{S}{i,j} of consecutive duos, the \emph{string
corresponding} to \SQD{S}{i,j} is $S[i,j]$.
Given a string $S[i,j]$ the sequence of duos \emph{induced} by $S[i,j]$ is
\SQD{S}{i,j}.

By a slight abuse of notation, we say that position $i$, with $1 \leq i
\leq n-1$, of a string $S$, induces duo \DUO{S}{i}.

\subsection*{Parameterized Complexity}
We briefly overview the main concepts about parameterized complexity that will
be useful in the rest of paper.
We refer the reader to~\cite{DowneyFellows,niedermeier} for an introduction to
parameterized complexity.

A decision problem $\Pi$ is fixed-parameter tractable under a parameter $k$
when there exists an algorithm of time complexity $O(f(k)poly-time(q))$,
where $q$ is the size of an instance of problem $\Pi$, and $f(k)$ is a computable
function that depends only on $k$ (and not on $q$).

A reduction to a kernel for a given parameterized problem $\Pi$ parameterized
by $k$ is a polynomial-time algorithm that, starting from an instance $(I,k)$
(where $k$ is the parameter) of $\Pi$, computes an instance $(I',k')$ (called
\emph{kernel}) such that $k' \leq k$ and the size of $I'$ is a function on $k$.
It is well-known~\cite{DowneyFellows,niedermeier} that any problem which is
fixed-parameter tractable can be reduced to a kernel of exponential size.
Moreover, there exist problems that admit a polynomial-size kernel, that is the
size of $I'$ is a polynomial function in $k$.

Our first FPT-algorithm is based on the well-known \emph{color-coding} technique, introduced in~\cite{ColorCoding}.
Our color-coding approach is based on the definition of perfect family of hash
functions, used in~\cite{ColorCoding} to derandomize the technique.
Color-coding is a technique widely used to design fixed-parameter algorithms.
While most of the applications of such technique are for graph
problems~\cite{ColorCoding,DBLP:journals/jcss/FellowsFHV11}, it has been
recently applied to problems on
strings~\cite{DBLP:journals/ipl/BonizzoniVDP10,DBLP:journals/alg/Swap2014,DBLP:journals/tcs/BulteauCD15}.

Before introducing the formal definition of perfect hash functions, we give an
informal description of the color-coding technique.
Informally, consider a problem $\Pi$ that, given a set $U$ of $n$ elements, aims
to identify whether there exists a feasible solution which is a subset of $U$ of
size $k$ ($k < n$).
The existence of such a subset can be computed by enumerating
all the subsets of size $k$ in time $O(n^k)$.
However, for some combinatorial problems, color-coding can be used to compute
whether such a set exists or not in time $f(k)poly-time(n)$, where $f(k)$
is a function that depends only on $k$, by first appropriately color-coding the
elements of the set $U$ with $k$ colors and then applying dynamic programming.

Now, we give the formal definition of perfect families of hash
functions, on which the color-coding technique is based.

\begin{Def}
\label{def:perfect-hash}
Let $U$ be a set of cardinality $n$ and let $L$ be a set of size $k$.
A family $F$ of hash functions from $U$ to $L$ is called \emph{perfect} if for
any subset $W \subseteq U$, such that $|W|=k$, there exists a function
$f \in F$ such that for each $x,y \in W$, $f(x)\neq f(y)$.
\end{Def}

Moreover, a perfect family $F$ of hash functions from $U$ to $\{l_1,
\dots, l_k \}$, having size $O(\log |U| 2^{O(k)})$, can be constructed
in time $O(2^{O(k)} |U| \log |U|)$ (see~\cite{ColorCoding}).

\section{An FPT Algorithm}
\label{sec:ftp1}
In this section, we present an FPT algorithm for Max-Duo PSM
parameterized by the number of preserved duos $k$ between the input
related strings $A$ and $B$.
More precisely, the FPT algorithm we present is parameterized by the
number of positions of $B$ that induce $k$ preserved duos.

Next, we prove the relation between the number of positions of $B$ inducing
the preserved duos and the number $k$ of preserved duos of a solution of Max-Duo
PSM on instance the $(A,B,k)$.

\begin{lemma}
\label{lem:pk}
Consider two related strings $A$ and $B$, a solution of Max-Duo PSM on the
instance $(A,B,k)$, and consider the $p$ positions of $B$ that induce such
$k$ duos.
Then it holds $p=k$.
\end{lemma}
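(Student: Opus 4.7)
The plan is to show the two inequalities $p \le k$ and $p \ge k$ separately, using only the definitions of preserved duo and of mapping.

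First I would set up the correspondence. By definition, a preserved duo $(A[i],A[i+1])$ satisfies $m(i)=j$ and $m(i+1)=j+1$ for some $j$, so it is mapped into the duo $(B[j],B[j+1])$, which is induced by the single position $j$ of $B$. This naturally defines a function $\phi$ from the set of $k$ preserved duos (viewed as duos of $A$) to the set of positions of $B$, sending $(A[i],A[i+1])$ to $j=m(i)$. Since each position of $B$ induces exactly one duo, we immediately get $p\le k$: the $k$ preserved duos in $B$ contribute at most $k$ distinct inducing positions.

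For the reverse inequality, I would argue that $\phi$ is injective, so that its image has exactly $k$ elements. Suppose two preserved duos $(A[i_1],A[i_1+1])$ and $(A[i_2],A[i_2+1])$ share the same inducing position in $B$, i.e.\ $m(i_1)=m(i_2)=j$. Because $m$ is a (partial) bijection on positions, injectivity forces $i_1=i_2$, and so the two duos coincide. Hence distinct preserved duos of $A$ yield distinct inducing positions in $B$, which gives $p\ge k$.

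Combining the two bounds yields $p=k$. There isn't really a hard step here: the whole content is that a duo is determined by its left endpoint and that the mapping $m$ is injective on positions, so no two preserved duos of $A$ can be routed to the same duo of $B$. The only thing to be careful about is not confusing a duo (a pair of consecutive positions) with the single position that induces it, which is precisely what the lemma is making explicit for use later in the FPT algorithm.
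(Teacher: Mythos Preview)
Your argument is correct and follows essentially the same idea as the paper: the paper's proof is a single line observing that each preserved duo $(S[i],S[i+1])$ is induced by exactly one position $i$, which is precisely the bijection you unpack via injectivity of $m$. You have simply made explicit the two inequalities that the paper leaves implicit.
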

\begin{proof}
The lemma follows easily from the fact that each preserved duo \DUO{S}{i} of a
string $S$, with $S \in \{ A,B \}$, is induced by position $i$ of $S$.
%
%
%
\end{proof}

Given an integer $k$, let $C = \{c_1, \dots , c_k \}$ be a set of $k$
\emph{colors}.
Let $F$ be a family of perfect hash functions from the positions of $B$ to the
set $C$.
Informally, we assign $k$ distinct colors to the positions of $B$ that
may induce preserved duos, and by dynamic programming we compute if
there exist $k$ distinct positions in $A$ that are mapped to
these candidate duos of $B$.
By Def.~\ref{def:perfect-hash}, we consider a function $f \in F$ that
associates a distinct color to each of the $k$ positions of $B$ that induces
a preserved duo.

Define $D[i,C']$, for $1 \leq i \leq n$, and $C' \subseteq C$, as a function
equal to $1$ if there   exist a set $S_B$ of $|C'|$ positions of $B$, each one
associated with a distinct color in $C'$, and a set $S_A$ of $|C'|$ positions
of $A[1,i]$, such that there exists a mapping from $S_A$ to $S_B$ that
preserves $|C'|$ duos; otherwise, the function is equal to $0$.

Define $P[h,i,C']$ as a function equal to $1$ when there exist positions $q$
and $r$ in $B$, with $1 \leq q < r \leq |B|$, such that each color in $C'$ is
associated with a position between $q$ and $r-1$, and substring $B[q,r]$ is
identical to $A[h,i]$; otherwise the function is equal to $0$.

We can compute $D[i,C']$ as follows:

\begin{equation*}
D[i,C'] = \max_{C''\subseteq C'}
\begin{cases}
D[h,C''] \times P[h+1,i,C'\setminus C'']\\
\qquad\qquad\text{where $h<i-1$, $i-h-1=|C'\setminus C''|$}\\
D[i-1,C']\\
\end{cases}
\end{equation*}

In the basic case it holds $D[1,C']=1$ if $|C'| = 0$, else $D[1,C']=0$.
It is easy to see that there exists a solution of Max-Duo PSM with $k$
preserved duos, if and only if $D[n,C]=1$.

Next we prove the correctness of the recurrence.

\begin{lemma}
\label{lem:corrColorCod}
Given two related strings $A$ and $B$, there exists a partial mapping of
$A[1,i]$ into $B$ that preserves $|C'|$ duos induced by positions of $B$
colored by $C'$ if and only if $D[i,C']=1$.
\end{lemma}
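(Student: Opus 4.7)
The plan is to establish the biconditional by induction on $i$, with the base case $i=1$ being immediate since no duo fits inside $A[1,1]$ and $D[1,C']=1$ exactly when $|C'|=0$.

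For the forward direction ($\Rightarrow$), I would take any partial mapping $m$ of $A[1,i]$ that preserves $|C'|$ duos and decompose it into its maximal blocks, i.e.\ maximal runs of consecutive positions of $A$ sent to consecutive positions of $B$. If position $i$ lies in no such block, then $m$ restricted to $A[1,i-1]$ preserves the same $|C'|$ duos with the same $B$-colors, so the inductive hypothesis gives $D[i-1,C']=1$ and the second branch of the recurrence triggers. Otherwise, let $A[h+1,i]$ be the maximal block ending at $i$, mapped to $B[q,r]$ with $h<i-1$; the colors $C_\ast\subseteq C'$ borne by positions $q,\dots,r-1$ of $B$ account for exactly the $|C_\ast|=i-h-1$ duos of this block. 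The restriction of $m$ to $A[1,h]$ preserves the remaining $|C'|-|C_\ast|$ duos using colors $C'\setminus C_\ast$, so $D[h,C'\setminus C_\ast]=1$ by induction, while $P[h+1,i,C_\ast]=1$ follows directly from the definition of $P$. Setting $C''=C'\setminus C_\ast$ in the recurrence then yields $D[i,C']=1$.

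For the backward direction ($\Leftarrow$), I would invert this construction. If $D[i,C']=1$ via $D[i-1,C']=1$, the inductive mapping of $A[1,i-1]$ doubles as one of $A[1,i]$. Otherwise the recurrence supplies $h<i-1$ and $C''\subseteq C'$ with $D[h,C'']=P[h+1,i,C'\setminus C'']=1$: the inductive hypothesis yields a partial mapping $m_1$ of $A[1,h]$ preserving $|C''|$ duos with colors $C''$, and $P$ provides positions $q,r$ in $B$ with $B[q,r]=A[h+1,i]$ and positions $q,\dots,r-1$ colored exactly by $C'\setminus C''$. Defining $m_2$ by $A[h+j]\mapsto B[q+j-1]$ and forming the union $m_1\cup m_2$ gives a partial mapping of $A[1,i]$ preserving $|C''|+(i-h-1)=|C'|$ duos, each induced by a $B$-position colored by $C'$.

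The principal obstacle is checking that, in the backward step, $m_1\cup m_2$ really is a well-defined partial mapping, i.e.\ that the $B$-ranges of $m_1$ and $m_2$ do not collide. The $A$-domains $[1,h]$ and $[h+1,i]$ are obviously disjoint, and by restricting $m_1$ to the positions of $A$ that actually belong to preserved duos one ensures that its interior $B$-positions carry colors in $C''$, hence cannot coincide with the positions $q,\dots,r-1$ of $m_2$, which carry colors in $C'\setminus C''$. The remaining boundary positions (block endpoints in $B$) then require only a short argument exploiting the maximality used in the forward direction, after which the inductive step closes.
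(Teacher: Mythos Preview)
Your proof follows the same inductive scheme as the paper's: induction on $i$, base case $i=1$ forcing $|C'|=0$, and in the inductive step a case split on whether position $i-1$ of $A$ induces a preserved duo (equivalently, whether $i$ is the right end of a maximal preserved block), matching the two branches of the recurrence. The only cosmetic difference is that your $\Rightarrow$ and $\Leftarrow$ labels are swapped relative to the paper's, which reads the biconditional in the opposite order.

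Substantively, you go slightly further than the paper by explicitly flagging the well-definedness of $m_1\cup m_2$ on the $B$-side in the $D[i,C']=1\Rightarrow\text{mapping}$ direction; the paper's proof simply asserts that the combined partial mapping exists without discussing possible collisions between the $B$-range of $m_1$ and the interval $[q,r]$. Your proposed resolution via ``maximality from the forward direction'' is not quite right as stated, however: in this direction $m_1$ is produced by the inductive hypothesis rather than by decomposing a given mapping, so there is no maximality available to invoke, and the boundary positions (the right endpoint of each block in $m_1$ and the position $r$ of $m_2$) carry no colour constraint that would separate them. That said, the paper leaves exactly the same point unaddressed, so your write-up is at least as complete as the original on this score.
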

\begin{proof}
We prove the lemma by induction on $i$. First consider the basic case,
that is $i=1$, then $D[1,C']=1$ if and only if $|C'|=0$, since $A[1]$ contains
no duo.

Assume that the lemma holds for $j<i$, we show that it holds for $j=i$.

($\Rightarrow$) First, assume that $D[i,C']=1$, then we show that there exists
a partial mapping of $A[1,i]$ into $B$ that preserves $|C'|$ duos induced by
positions of $B$ colored by $C'$.

By assuming $D[i,C']=1$, then, if $D[i-1,C']=1$, by induction hypothesis there
exists a partial mapping of $A[1,i-1]$ into $B$ that preserves $|C'|$ duos
induced by positions  of $B$ colored by $C'$.
On the other hand, we have $D[i,C'] = D[h,C''] \times P[h+1,i,C'\setminus
C'']=1$, for some $C''\subseteq C'$, with $|C' \setminus C''|=i-h-1$.
Then, since $D[h,C'']=1$, by induction hypothesis there exists a partial
mapping of $A[1,h]$ into $B$ that preserves $|C''|$ duos induced by positions
of $B$ colored by $C''$.
Moreover, $P[h+1,i,C'\setminus C'']=1$, and it follows that there
exist positions $q$ and $r$ of $B$ such that each color in
$C'\setminus C''$ is associated with a distinct position $t$ of $B$,
with $q \leq t \leq r-1$, and $B[q,r]$ is identical to $A[h+1,i]$.
Hence, it follows that $i-h-1$ duos are preserved by mapping $A[h+1,i]$ into
$B[q,r]$, and are induced by positions of $B$ colored by $C' \setminus C''$.
As a consequence there exists a partial mapping of $A[1,i]$ into $B$
that preserves $|C'|$ duos induced by positions  of $B$ colored by $C'$.

($\Leftarrow$) Now, assume that there exists a partial mapping of $A[1,i]$ into
$B$ that preserves $|C'|$ duos induced by positions  of $B$ colored by $C'$.
We show that $D[i,C']=1$.

We can consider the following cases: there exists a sequence of preserved
consecutive duos \SQD{A}{h+1,i} mapped into a sequence of preserved
consecutive duos \SQD{B}{z+1,j}, with $j-z = i-h$  or no preserved duo is
induced by position $i-1$ of $A$.
In the latter case, by induction hypothesis, $D[i-1,C']=1$ and hence
$D[i,C']=1$.
In the former case, since function $f$ assigns a distinct color to each
position of $B$ that induces a preserved duo, there exists a set $C''$ such
that each position of \SQD{B}{z+1,j} inducing a preserved duo with
\SQD{A}{h+1,i} is associated with a distinct color in $C'' \subseteq C'$,
and each position of $B$ that induces a preserved duo with a position of
$A[1,h]$ is associated with a distinct color of  $C'\setminus C''$.
Hence, $P[h+1,i, C'']=1$, for some set $C'' \subseteq C'$.
Moreover, by induction hypothesis $D[h,C'\setminus C'']=1$, with
$i-h-1=|C' \setminus C''|$, and by the first case of the recurrence
$D[i,C']=1$.
\end{proof}

From the previous lemma, we can conclude the correctness of the algorithm.

\begin{teo}
\label{teo:ColorCoding}
Let $A$ and $B$ be two related strings on an alphabet $\Sigma$.
Then, it is possible to compute if there exists a solution of Max-Duo
PSM on instance $(A,B,k) $ in time $2^{O(k)}O(poly(n)\log(n))$.
\end{teo}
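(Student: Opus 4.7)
The plan is to combine the perfect hash family $F$ from Definition~\ref{def:perfect-hash} with the dynamic programming of Lemma~\ref{lem:corrColorCod}. By Lemma~\ref{lem:pk}, a solution preserving $k$ duos is induced by exactly $k$ positions of $B$, so by the perfect hash property there exists at least one $f \in F$ that assigns $k$ distinct colors (that is, all of $C$) to these $k$ positions. The algorithm iterates over all $f \in F$, and for each colouring computes the table $D[i,C']$ via the recurrence, outputting YES if and only if $D[n,C]=1$ for some $f$.

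For correctness, I would argue both directions. If a solution exists, then some $f \in F$ colours the $k$ inducing positions with distinct colours, and Lemma~\ref{lem:corrColorCod} (the $(\Leftarrow)$ direction) then forces $D[n,C]=1$. Conversely, if $D[n,C]=1$ for some $f$, Lemma~\ref{lem:corrColorCod} (the $(\Rightarrow)$ direction) yields a partial mapping of $A$ into $B$ preserving $|C|=k$ duos, which is exactly a witness for Max-Duo PSM on $(A,B,k)$.

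For the running time, I would first recall that $F$ has size $O(2^{O(k)} \log n)$ and can be built in time $O(2^{O(k)} n \log n)$. For each fixed $f \in F$, the table $P[h,i,C']$ has $O(n^2 2^k)$ entries; each entry can be filled in polynomial time by, for every candidate pair of positions $q,r$ in $B$, checking the string equality $B[q,r]=A[h,i]$ and the set of colours on positions $q,\dots,r-1$. The table $D[i,C']$ has $O(n\,2^k)$ entries, and the recurrence ranges over $h<i$ and $C''\subseteq C'$; by iterating over all pairs $(C',C'')$ with $C'' \subseteq C'$ (which total only $3^k = 2^{O(k)}$), each entry is computed in $\mathrm{poly}(n)\cdot 2^{O(k)}$ time. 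Multiplying by $|F|$ gives an overall running time of $2^{O(k)}\cdot O(\mathrm{poly}(n)\log n)$.

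The main obstacle will be ensuring that the subset enumeration inside the recurrence does not introduce an extra factor of $2^k$ per DP cell that would push the bound to $4^k$; the standard device of summing over all ordered pairs $(C',C'')$ with $C''\subseteq C'$ keeps this factor within $3^k = 2^{O(k)}$. A secondary care-point is the precomputation of $P[h+1,i,C'\setminus C'']$ so that the equality test on $B[q,r]=A[h+1,i]$ together with the set-of-colours condition is resolved in time polynomial in $n$ and independent of $k$ beyond the $2^k$ factor absorbed into $2^{O(k)}$; once this is organised, the stated bound follows directly.
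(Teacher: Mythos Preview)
Your proposal is correct and follows essentially the same approach as the paper: correctness is delegated to Lemma~\ref{lem:corrColorCod} and Lemma~\ref{lem:pk}, and the running time is obtained by multiplying the cost of one DP pass by the size of the perfect hash family. One small remark: your concern about the subset enumeration ``pushing the bound to $4^k$'' is unwarranted, since $4^k = 2^{2k} = 2^{O(k)}$ already fits the target bound; the paper in fact simply accepts the $2^{2k}$ factor without invoking the $3^k$ trick, and also exploits the constraint $i-h-1 = |C'\setminus C''| \le k$ to shrink the domain of $P$ to $O(2^k k n)$ rather than $O(2^k n^2)$ entries.
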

\begin{proof}
The correctness of the algorithm follows from the correctness of the
dynamic programming recurrence (see Lemma~\ref{lem:corrColorCod}).
Now, we consider the time complexity of the algorithm.
We recall that $n=|A|=|B|$.
First, assume that there exists a function $f$ 
in a perfect family $F$ of hash functions, such that $f$ color-codes the
positions of $B$.
Table $D[i,C']$ contains $O(2^k n)$ entries. Each entry is computed by
the recurrence which looks at most $2^k n$ possible entries.
Indeed, in the first case, the recurrence must check the entries $D[h,C'']$,
where $h \leq n$ (hence, there are at most $n$ of such values),  
and $C'' \subseteq C'$ (hence,
there are at most $2^k$ of such subsets). Moreover, the value of
$P[h,i,C' \setminus C'']$ can be checked in constant time.
Notice that $P[h,i,C' \setminus C'']$ can be precomputed in time
$O(2^k k^3 n^2)$.
Indeed  $P[h,i,C' \setminus C'']$ contains $O(2^k k n)$ positions, since for
each position $i$ in $A$, there exist at most $k$ positions $h+1$ and at most
$2^k$ subsets $C' \setminus C''$ to be checked.
Given positions $h$ and $i$, and subset $C' \setminus C''$, we must check
that each color in $C' \setminus C''$ is associated with a position between $q$
and $r-1$ of $B$, and that substring $B[q, r]$ is identical to $A[h , i]$.
This can be done in time $O(k^2 n)$ checking whether each substring $B[q, r]$
of length bounded by $k$ (there are at most $O(kn)$ of such strings) is identical to
$A[h, i]$ and each color in $C' \setminus C''$ is associated with a position
$j$ of $B[q, r-1]$.

It follows that table $D[i,C']$ can be computed in time $O(2^{2k} k^3 n^2)$
(considering the cost to precompute $P[h+1,i,C' \setminus C'']$).

In order to find an injective function $f$ in a perfect family $F$, we must iterate 
through the $2^{O(k)}O(\log(n))$ functions of $F$. Since the family $F$ can be computed in time
$2^{O(k)}poly(n)$ and $k \leq n$, it follows that the overall complexity is indeed $2^{O(k)}O(poly(n)\log(n))$.

%
\end{proof}


\section{A Reduction to a Polynomial Kernel}
\label{sec:kernel}

In this section, we prove that the Max-Duo PSM problem admits a
polynomial size kernel, by presenting a polynomial-time algorithm that,
starting from an instance $(A,B,k)$ of Max-Duo PSM, computes an instance
$(A',B',k)$, such that the length  of $A'$ and $B'$ is bounded by $O(k^6)$.

The general idea of the reduction is that in \emph{Phase 1}, starting from the related strings $A$ and $B$, we
compute two subsets of duos of $A$ and $B$, denoted by
$C_A$ and $C_B$ respectively, that may eventually be preserved, while any other duo not in these sets will not be preserved.
Then, in \emph{Phase 2}, starting from sets $C_A$ and $C_B$, we compute two related strings $A'$ and $B'$ respectively,
so that $(A',B',k) $ is an instance of Max-Duo PSM.

\subsection{Phase 1: Constructing Small Sets of Relevant Duos}

Here, we present the algorithm that in polynomial-time, starting from the related strings $A$ and $B$, 
computes two subsets $C_A$ and $C_B$, of duos of $A$ and $B$, respectively, called \emph{candidate sets}, having the
following properties:
\begin{enumerate}

\item there exists a solution of Max-Duo PSM on instance $(A,B,k)$ if and only if there exist  
$C'_A \subseteq C_A$, $C'_B \subseteq C_B$, with $|C'_A|=|C'_B|=k$ such that there is a mapping
of $C'_A$ into $C'_B$;

\item $C_A$ and $C_B$ contains $O(k^6)$ duos.
\end{enumerate}

In order to compute $C_A$ and $C_B$, the algorithm iteratively
adds (bounding its size) a set of duos
of $A$ (of $B$, respectively) to $C_A$ ($C_B$, respectively).
Recall that $k$ denotes the number of duos preserved by a solution
of Max-Duo PSM.


Before giving the details, we describe informally the three rules on which our
Phase 1 of the kernelization is based.
Rule 1 computes a maximum matching $M$ of a graph that represents the duos of $A$ and
$B$. Since $M$ is a maximum matching, we are sure that if there exists a preserved
duo whose corresponding string is $'ab'$ in a solution, then 
all the duos whose corresponding string is $'ab'$ will be added
to set $C_A$ or set $C_B$ (see Lemma~\ref{lem:bound:match2}).

Consider one of such duos \DUO{A}{i}; in Rule 2, we add to $C_A$ all the duos (not already included in $C_A$) that
belong to the sequences of consecutive duos induced by the substrings of size
$k+1$ that include \DUO{A}{i}.
This ensures that if a sequence \SQD{A}{h,j}, that consists of $k$ duos and
that includes \DUO{A}{i}, is preserved by a solution of Max-Duo PSM, then the
duos of \SQD{A}{h,j} are included in $C_A$.
Finally, notice that \DUO{A}{i} can be mapped to a set of duos of $B$ 
whose size is not bounded by a function of $k$.
Rule 3 adds either all the sequences of duos of $B$ where a sequence of at most
$k$ duos of $C_A$ can be mapped (if such sequences are at most $k^2$) or it
adds at most $k^2+1$ of such sequences (without loss of generality we add the
$k^2+1$ leftmost of such sequences).
It is sufficient to add $k^2+1$ sequences of duos, since we can ensure (see
Lemma~\ref{lem:bound:extStr2} and Lemma~\ref{lem:bound:extStr3}) that if there
exists a solution where \DUO{A}{i} is preserved, then the same property holds
for a partial mapping of the duos of $C_A$ into the duos of $C_B$.
Indeed, since there exist at most $k$ sequences of preserved duos of $A$,
each one having length at most $k$, it follows that they can overlap at most
$k^2$ sequences of duos where \DUO{A}{i} can be mapped.
For this reason, by adding $k^2+1$ disjoint sequences of duos, we can guarantee
that there exists a sequence of duos of $B$ where \DUO{A}{i} can be mapped.

%
We start by giving the details of our algorithm.
First, we consider an easy bound on the length of each sequence of
consecutive duos of $A$ and $B$ that can be preserved. Notice indeed
that if there exists a sequence of consecutive duos of $A$ having length
at least $k$ that can be mapped into a sequence of consecutive duos of $B$ having
length $k$ (which can be computed in polynomial time), then obviously there exists
a solution that preserves at least $k$ duos. Hence, we assume that the following
claim holds.

\begin{Cl}
\label{claim:boundlength}
There is no sequence of consecutive duos of $A$ having length at least $k$ that can be mapped into 
a sequence of consecutive duos of $B$ having length at least $k$.
\end{Cl}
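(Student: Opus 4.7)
The plan is to interpret the claim as a safe \emph{working assumption} enforced by a cheap preprocessing step, rather than as a combinatorial fact about arbitrary instances. In other words, I would prove the contrapositive: if the claim fails on $(A,B,k)$, then $(A,B,k)$ is already a yes-instance of Max-Duo PSM, decided in polynomial time, so the kernelization may safely return a trivial yes-instance; otherwise the claim holds by definition.

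First I would observe that the premise "a sequence of consecutive duos of $A$ of length at least $k$ can be mapped into a sequence of consecutive duos of $B$ of length at least $k$" is equivalent, by the definitions in Section~\ref{sec:prel}, to the existence of positions $1 \leq h < h+k \leq |A|$ and $1 \leq z < z+k \leq |B|$ with $A[h,h+k] = B[z,z+k]$. This is a purely string-matching condition: it asks for a common substring of length $k+1$ occurring in both $A$ and $B$. Such a pair of occurrences (if any) can be detected in polynomial time, for instance by scanning all $O(n)$ substrings of length $k+1$ of $A$ and checking each against $B$, or more efficiently with any standard suffix-based data structure.

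Next I would argue the key, one-line implication: given such $h$ and $z$, the partial mapping that sends position $h+t$ of $A$ to position $z+t$ of $B$ for $0 \leq t \leq k$ is well defined (since $A[h+t]=B[z+t]$), bijective on its domain, and preserves every duo \DUO{A}{h+t} for $0 \leq t \leq k-1$. Thus it preserves exactly $k$ consecutive duos, so $(A,B,k)$ is a yes-instance, and the kernelization procedure can output any fixed trivial yes-instance without doing further work. Consequently, we only need to run the remaining reduction rules on instances where no such pair of long matching substrings exists, which is precisely the content of Claim~\ref{claim:boundlength}.

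There is no real obstacle in this argument: the claim carries no combinatorial depth on its own, it merely records a polynomial-time escape hatch that lets later rules assume every preserved sequence of consecutive duos has length strictly less than $k$. The only point requiring a bit of care is to confirm that the ``at least $k$'' in the claim refers to the number of duos (equivalently, substrings of length $k+1$), so that producing $k$ preserved duos from such a match is immediate, which is consistent with the length conventions fixed in Section~\ref{sec:prel}.
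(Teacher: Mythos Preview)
Your proposal is correct and matches the paper's own treatment: the paper does not give a separate proof of Claim~\ref{claim:boundlength} but explicitly introduces it as a working assumption, noting just before the claim that if such a pair of length-$k$ sequences of consecutive duos existed it could be found in polynomial time and would immediately witness a yes-instance. Your write-up simply spells out that same contrapositive argument (common substring of length $k+1$, explicit partial mapping preserving $k$ duos) in more detail than the paper bothers to, but the approach is identical.
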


Now, we are able to define the rules for the Phase 1 of the kernelization.
The first rule is based on the approach of~\cite{DBLP:conf/wabi/BoriaKLM14} 
that leads to a $\frac{1}{4}$-approximation algorithm.
The approximation algorithm given in~\cite{DBLP:conf/wabi/BoriaKLM14}  is
based on a graph representation of the duos of the given input strings. 
A maximum matching of this graph is then computed and it is decomposed into
four submatchings; the maximum of such submatchings is then returned as the approximated solution
of factor $\frac{1}{4}$.

As in~\cite{DBLP:conf/wabi/BoriaKLM14}, we first consider a bipartite
graph $G=(V_A \uplus V_B, E)$ associated with the related strings $A$
and $B$, input of  Max-Duo PSM, and defined as follows:
\begin{itemize}

\item for each duo in $A$, there exists a vertex in $V_A$;

\item for each duo in $B$, there exists a vertex in $V_B$;

\item there exists an edge $\{v_a, v_b\}\in E$ connecting a vertex $v_a \in
V_A$ to a vertex $v_b \in V_B$ if and only if they represent a
preservable duo.

\end{itemize}

Now, we are ready to present the first rule of the kernelization algorithm.

\paragraph*{\textbf{Rule 1}}
Compute (in polynomial time) a maximum matching $M\subseteq E$ of $G$
and define $C_A$ and $C_B$ as the sets of duos corresponding to
the endpoints of each edge of $M$. More precisely
\[
C_A = \{ v_a \in V_A | \{v_a, x\} \in M \}
\]
and
\[
C_B = \{ v_b \in V_B | \{x, v_b\} \in M \}.
\]

It can be shown that $|C_A|, |C_B| \leq 4k$, since otherwise 
we can compute a solution of Max-Duo PSM on instance $(A,B,k)$.

\begin{lemma}
\label{lem:bound:match1}
Given two related strings $A$ and $B$, let $G$ be the corresponding graph. 
Let $M$ be a maximum matching of $G$ and let $C_A$ and $C_B$ be the two sets
of duos built by Rule 1. Then, if $|M| \geq 4k$ and $|C_A|, |C_B| \geq 4k$,
Max-Duo PSM on  instance $(A,B,k)$ admits a feasible solution.
\end{lemma}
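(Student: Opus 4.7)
The plan is to adapt the $\frac{1}{4}$-approximation argument of Boria et al., turning it into an exact statement: whenever the matching is large enough, the best of four canonical submatchings already gives a feasible solution. Given a maximum matching $M$ of $G$ with $|M| \geq 4k$, I would partition the edges of $M$ into four classes based on the parities of the starting positions of the matched duos. For each edge $\{v_a, v_b\} \in M$, where $v_a$ represents the duo $\DUO{A}{i}$ and $v_b$ represents the duo $\DUO{B}{j}$, assign the edge to the class indexed by $(i \bmod 2,\ j \bmod 2) \in \{0,1\}^2$.

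The key step is then to verify that each of the four classes defines a valid partial mapping from positions of $A$ to positions of $B$. Two matched duos in the same class cannot conflict on the $A$-side: two $A$-duos $\DUO{A}{i}$ and $\DUO{A}{i'}$ share a position exactly when $|i-i'|=1$, and in that case $i$ and $i'$ have opposite parities, so the corresponding edges land in different classes. The same argument, applied with the $B$-coordinate, rules out conflicts on the $B$-side. Therefore, the duos selected by a single class occupy pairwise-disjoint position pairs in $A$ and pairwise-disjoint position pairs in $B$, and thus induce a partial bijection from positions of $A$ to positions of $B$ that preserves each of them.

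By the pigeonhole principle, at least one of the four classes contains $\lceil |M|/4 \rceil \geq k$ edges, and the preserved duos of that class provide a feasible solution to Max-Duo PSM on the instance $(A,B,k)$. Note that the hypothesis $|C_A|,|C_B| \geq 4k$ is automatic once $|M| \geq 4k$, since each edge of $M$ contributes a distinct element to both $C_A$ and $C_B$ by the definition of matching.

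The main (essentially the only) technical point is the parity-based verification that each class is conflict-free; everything else is a direct application of the matching definition and a pigeonhole step. No delicate case analysis is needed, since a duo always spans exactly two consecutive positions, which makes the parity argument fully general and independent of the alphabet or of the positions of the duos within $A$ and $B$.
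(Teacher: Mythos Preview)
Your proposal is correct and follows essentially the same approach as the paper: both rely on decomposing the matching $M$ into four submatchings, each of which induces a valid partial mapping, and then applying pigeonhole. The only difference is that the paper simply invokes the decomposition result of Boria et al.\ as a black box, whereas you explicitly carry out the parity-based partition $(i \bmod 2,\, j \bmod 2)$ and verify that it eliminates all position conflicts; this is exactly the construction underlying the cited $\tfrac{1}{4}$-approximation, so your argument is a self-contained version of the paper's proof rather than a genuinely different route.
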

\begin{proof}
It is shown in~\cite{DBLP:conf/wabi/BoriaKLM14} that the value of a maximum matching $M$
is an upper bound on the number of preserved duos of the related strings $A$ and $B$.
Moreover, in~\cite{DBLP:conf/wabi/BoriaKLM14} it is shown that $M$
can be partitioned in polynomial-time into four submatchings, such that each of them induces a partial mapping
of $A$ into $B$; if more than $k$ duos are preserved
by one of the partial mappings, then it is a solution of Max-Duo PSM on
instance $(A,B,k)$.
Then, if $|M| \geq 4k$ and, by construction of $C_A$ and $C_B$, $|C_A|, |C_B| \geq 4k$, 
one of the submatchings induces a partial mapping
of $A$ into $B$ that preserves at least $k$ duos, hence 
Max-Duo PSM on  instance $(A,B,k)$ admits a feasible solution.
\end{proof}

In the following, we assume that $|M| < 4k$ and $|C_A|, |C_B| < 4k$.
Next, we prove another useful property of the computed maximum matching $M$ of $G$.
We denote by $M_A$ ($M_B$, respectively) the set of vertices of $V_A$ ($V_B$,
respectively) that are endpoints of an edge belonging to $M$.

\begin{lemma}
\label{lem:bound:match2}
Consider the symbols $a,b \in \Sigma$ and assume that there exist preservable
duos of the related strings $A$ and $B$, whose corresponding string is $'ab'$.
Then, at most one of the sets $V_A \setminus M_A$ and $V_B \setminus M_B$
contains a vertex associated with a duo whose corresponding string is $'ab'$.
\end{lemma}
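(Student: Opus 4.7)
The plan is to argue by contradiction, exploiting the maximality of the matching $M$ computed in Rule 1.

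First I would suppose, for contradiction, that both $V_A \setminus M_A$ and $V_B \setminus M_B$ contain vertices whose corresponding duos have string $'ab'$. Let $v_a \in V_A \setminus M_A$ be a vertex associated with a duo \DUO{A}{i} such that $A[i]A[i+1] = ab$, and let $v_b \in V_B \setminus M_B$ be a vertex associated with a duo \DUO{B}{j} such that $B[j]B[j+1] = ab$. By definition of $v_a$ and $v_b$, neither vertex is incident to an edge of $M$.

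Next I would appeal to the construction of $G$: since \DUO{A}{i} and \DUO{B}{j} correspond to the same string $'ab'$, they are preservable, and hence by the definition of the edge set $E$ there exists an edge $\{v_a, v_b\} \in E$. Consider then $M' = M \cup \{\{v_a, v_b\}\}$. Since $v_a \notin M_A$ and $v_b \notin M_B$, no edge of $M$ shares an endpoint with $\{v_a, v_b\}$, so $M'$ is still a matching of $G$, and $|M'| = |M| + 1$. This contradicts the maximality of $M$, completing the proof.

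There is essentially no obstacle here: the statement is a direct consequence of the elementary fact that a maximum matching leaves no augmenting edge between two unmatched vertices on opposite sides of a bipartite graph. The only thing to check carefully is that the hypothesis ``there exist preservable duos whose corresponding string is $'ab'$'' is actually used implicitly through the fact that any two such duos, one in $A$ and one in $B$, are joined by an edge of $G$; this is guaranteed by the construction of $G$ and makes the augmenting single edge $\{v_a, v_b\}$ available.
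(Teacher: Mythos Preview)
Your proof is correct and follows essentially the same approach as the paper: assume for contradiction that both sides have an unmatched $'ab'$-vertex, observe that by construction of $G$ these two vertices are joined by an edge, and conclude that adding this edge to $M$ yields a strictly larger matching, contradicting the maximality of $M$. Your write-up is in fact slightly more explicit than the paper's in spelling out why $M'$ remains a matching.
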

\begin{proof}
Assume by contradiction that the lemma does not hold.
It follows that there exists one vertex of $v_a \in V_A$ associated with a duo of $A$ whose corresponding string is $'ab'$
and there exists one vertex of $v_b \in V_B$ associated with a duo of $B$ whose corresponding string is $'ab'$, 
such that $v_a$, $v_b$ are not endpoints of an edge of $M$.
Hence by adding such an edge to $M$ (which exists by construction of
$G$), it is possible to obtain a matching larger than $M$, which
contradicts the fact that $M$ is maximum.
\end{proof}

%

We recall that, given two positions $1 \leq i < j \leq n$, \SQD{S}{i,j} denotes 
the sequence of \emph{consecutive  duos} \DUO{S}{i}, \dots , $(S[j-1],S[j])$. 
By a slight abuse of notation, we denote by \SQD{S}{i-k,i+k} the sequence
of duos between position $S[l]$, where $l=\max\{1,i-k\}$, and position 
$S[r]$, where $r=\min\{n,i+k\}$.


\paragraph*{\textbf{Rule 2}} 
For each duo \DUO{S}{i} of $C_{S}$, with $S \in \{ A,B \}$,
%
%
add to $C_S$ all the duos of \SQD{S}{i-k,i+k}.

\vspace*{.3cm}
Given $S\in \{A, B\}$, we recall that $\bar{S}$ is the string in $\{A, B\}
\setminus \{S\}$. It can be shown that the following properties hold.
\begin{lemma}
\label{lem:bound:extStr}
Given a string $S$, with $S \in \{ A,B \}$, consider a duo \DUO{S}{i} added by Rule 1 to $C_{S}$.
If there exists a solution of Max-Duo PSM on instance $(A,B,k)$ that maps a sequence \SQD{S}{i-t_1,i+t_2} 
with $0 \leq t_1 \leq k$ and $1 \leq t_2 \leq k$, of consecutive duos of $S$ that includes \DUO{S}{i} into a 
sequence \SQD{\bar{S}}{i-u_1,i+u_2} of consecutive duos of $\bar{S}$, then Rule 2 adds all the
duos of \SQD{S}{i-t_1,i+t_2}
to $C_S$.
\end{lemma}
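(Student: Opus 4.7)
The lemma is essentially a pure index-containment statement: once Rule 2 is applied to \DUO{S}{i} it adds to $C_S$ every duo of \SQD{S}{i-k,i+k}, so it suffices to verify that \SQD{S}{i-t_1,i+t_2} is entirely contained in \SQD{S}{i-k,i+k}. The plan is therefore to unpack both sequences according to the clipping convention introduced just before Rule 2, and to combine the hypotheses $t_1,t_2 \leq k$ with the fact that \SQD{S}{i-t_1,i+t_2} is a genuine sequence of consecutive duos of $S$.

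First I would use the hypothesis that \DUO{S}{i} was added to $C_S$ by Rule 1 to trigger Rule 2 at position $i$; by the convention stated just before Rule 2, the sequence \SQD{S}{i-k,i+k} consists of the duos $(S[j],S[j+1])$ for $\max\{1,i-k\} \leq j \leq \min\{n,i+k\}-1$, and all of them are added to $C_S$. Next, since \SQD{S}{i-t_1,i+t_2} is preserved by a solution it must, in particular, be a well-defined sequence of consecutive duos of $S$, so $i-t_1 \geq 1$ and $i+t_2 \leq n$.

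Finally, combining $i-t_1 \geq 1$ with $t_1 \leq k$ gives $i-t_1 \geq \max\{1,i-k\}$, and combining $i+t_2 \leq n$ with $t_2 \leq k$ gives $i+t_2 \leq \min\{n,i+k\}$. Thus every duo $(S[j],S[j+1])$ of \SQD{S}{i-t_1,i+t_2}, whose starting position $j$ satisfies $i-t_1 \leq j \leq i+t_2-1$, also satisfies $\max\{1,i-k\} \leq j \leq \min\{n,i+k\}-1$, so it belongs to \SQD{S}{i-k,i+k} and is added to $C_S$ by Rule 2. I do not expect any real obstacle: the assumption about the mapping into \SQD{\bar{S}}{i-u_1,i+u_2} is not actually used beyond guaranteeing that \SQD{S}{i-t_1,i+t_2} is a legitimate sequence of consecutive duos of $S$, so the whole argument reduces to checking the two one-line inequalities on the endpoints.
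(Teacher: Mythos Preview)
Your proof is correct and follows essentially the same containment argument as the paper: since Rule~2 adds every duo of \SQD{S}{i-k,i+k} and the hypotheses $t_1,t_2\le k$ force \SQD{S}{i-t_1,i+t_2} to sit inside that window, all its duos land in $C_S$. The only cosmetic difference is that the paper invokes Claim~\ref{claim:boundlength} to re-derive the bounds $t_1,t_2\le k$ from the mapping hypothesis (even though they are already part of the statement), whereas you use them directly and are a bit more explicit about the boundary clipping; neither difference is substantive.
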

\begin{proof}
From Claim~\ref{claim:boundlength}, it follows that 
we assume that if a solution of Max-Duo PSM on instance $(A,B,k)$ defines a mapping of a sequence 
\SQD{S}{i-t_1,i+t_2} of $S$ into a sequence \SQD{\bar{S}}{i-u_1,i+u_2} of $\bar{S}$, then $t_2+t_1 +1  = u_2+u_1 +1\leq k$,
hence $t_1,t_2 \leq k$ and $u_1, u_2\leq k$. Since Rule 2 adds to $C_S$ the sequence \SQD{S}{i-k,i+k} of duos,
the lemma holds.
%
\end{proof}

Moreover, we can bound the number of duos added by Rule 2 as follows.

\begin{lemma}
\label{lem:bound:number:extrStr}
Rule 2 adds at most $8k^2$ duos to each set $C_S$, with $S \in \{ A,B
\}$.
\end{lemma}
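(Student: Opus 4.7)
The plan is to use the bound on $|C_S|$ established in Lemma~\ref{lem:bound:match1} (or, more precisely, the assumption we are operating under after that lemma) together with a length bound on the sequence \SQD{S}{i-k,i+k} inspected by Rule 2 for each existing candidate duo.

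First, I recall that if Max-Duo PSM on instance $(A,B,k)$ does not admit a trivial solution witnessed by Rule 1, we are assuming that $|C_A|, |C_B| < 4k$ at the moment Rule 2 is applied. So Rule 2 processes at most $4k$ duos \DUO{S}{i} in $C_S$.

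Next, I would observe that for each such duo \DUO{S}{i}, Rule 2 adds the duos in \SQD{S}{i-k,i+k}. By the convention recalled just before Rule 2, this sequence spans positions from $\max\{1, i-k\}$ to $\min\{n, i+k\}$, so it contains at most $(i+k)-(i-k) = 2k$ consecutive duos.

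Combining these two bounds, Rule 2 adds at most $4k \cdot 2k = 8k^2$ duos to $C_S$, which is the claim. The argument is really just a counting step, so I expect no genuine obstacle; the only thing to be careful about is to phrase it as an upper bound on the number of \emph{added} duos (duplicates with duos already in $C_S$ or added from overlapping windows only make the count smaller), and to explicitly justify the length $2k$ of \SQD{S}{i-k,i+k} using the notational convention introduced just before the statement of Rule 2.
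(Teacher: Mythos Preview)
Your proof is correct and matches the paper's argument essentially line for line: the paper also uses the standing assumption $|M|<4k$ (equivalently $|C_S|<4k$) to bound the number of duos processed by Rule 2, observes that each \SQD{S}{i-k,i+k} contains at most $2k$ duos, and multiplies to obtain $8k^2$. There is nothing to add.
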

\begin{proof}
Since $|M|< 4k$, it follows that there exist less than $4k$ positions $i$, with $1 \leq i \leq n$, such that 
the sequence \SQD{S}{i-k,i+k} of duos are added to $C_S$.
Since there exist $2k$ consecutive duos in \SQD{S}{i-k,i+k},
at most $8k^2$ duos are added to $C_S$.
\end{proof}

%

We are now able to define Rule 3.

\paragraph*{\textbf{Rule 3}}
Consider a sequence \SQD{S}{p,q} of consecutive duos that has length at most $k$, 
such that each duo \DUO{S}{i}, with $p \leq i \leq q-1$, is added by Rule 1 and Rule 2 to set $C_S$;
add a set of
candidate duos to $C_{\bar{S}}$ as follows:
\begin{itemize}
\item if there exist at least $k^2+1$ non-overlapping sequences in
$\bar{S}$ where  \SQD{S}{p,q} can be mapped:
add to
$C_{\bar{S}}$ all the duos belonging to the leftmost non-overlapping
$k^2+1$ sequences in $\bar{S}$ where
\SQD{S}{p,q} can be mapped;

\item else, add to $C_{\bar{S}}$ all the duos belonging to the
sequences of consecutive duos in $\bar{S}$ where
\SQD{S}{p,q} can be mapped.
\end{itemize}

It can be shown that the following property holds.
\begin{lemma}
\label{lem:bound:extStr2}
Consider a solution $X$ that preserves a sequence \SQD{S}{p,q}
of consecutive duos such that each duo \DUO{S}{i}, with $p \leq i \leq q-1$, is added by Rule 1 and Rule 2 to sets $C_S$.
%
Then Rule 3 either adds to $C_{\bar{S}}$ the duos of $\bar{S}$ where
\SQD{S}{p,q}
is mapped by $X$, or there are $k^2+1$ non-overlapping
sequences of consecutive duos of $\bar{S}$ in $C_{\bar{S}}$ where 
\SQD{S}{p,q} can be mapped.
\end{lemma}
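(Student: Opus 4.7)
The plan is a direct case analysis on which branch of Rule 3 is triggered when applied to the sequence \SQD{S}{p,q}. Before entering the cases, I would first verify that Rule 3 is indeed applicable to \SQD{S}{p,q}: the hypothesis of the lemma gives that every duo \DUO{S}{i}, for $p \leq i \leq q-1$, already lies in $C_S$, while the fact that $X$ preserves only $k$ duos in total forces the length $q-p$ of \SQD{S}{p,q} to be at most $k$. Hence \SQD{S}{p,q} fulfils the premise under which Rule 3 fires.

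I would then split on the disjunction in the definition of Rule 3. In the first sub-case, there exist at least $k^2+1$ non-overlapping sequences of consecutive duos of $\bar{S}$ where \SQD{S}{p,q} can be mapped; Rule 3 then explicitly inserts the duos of the $k^2+1$ leftmost such non-overlapping sequences into $C_{\bar{S}}$, which is exactly the second alternative of the conclusion. In the complementary sub-case, fewer than $k^2+1$ non-overlapping sequences exist, and Rule 3 adds to $C_{\bar{S}}$ the duos of every sequence of $\bar{S}$ where \SQD{S}{p,q} can be mapped. Since $X$ preserves \SQD{S}{p,q}, it maps this sequence onto some sequence of consecutive duos of $\bar{S}$ whose corresponding substring equals $S[p,q]$; such a sequence is by definition one where \SQD{S}{p,q} can be mapped, so its duos are among those inserted into $C_{\bar{S}}$, establishing the first alternative.

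There is essentially no technical obstacle here: the lemma is really a bookkeeping statement that reads off the conclusion from the two branches of Rule 3. The only point that deserves explicit justification is the last one, namely that the target of \SQD{S}{p,q} under $X$ qualifies as a sequence where \SQD{S}{p,q} can be mapped, and this is immediate from the definitions of mapping and preserved duos given in Section~\ref{sec:prel}.
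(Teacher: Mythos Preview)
Your proposal is correct and follows essentially the same approach as the paper's proof: a direct case split on which branch of Rule 3 fires for the sequence \SQD{S}{p,q}. You spell out a couple of points the paper leaves implicit (applicability of Rule 3 and why the image of \SQD{S}{p,q} under $X$ is among the sequences added in the second branch), but the argument is the same; one small remark is that the length bound $q-p\leq k$ is more naturally justified by the standing assumption of Claim~\ref{claim:boundlength} than by counting the duos preserved by $X$.
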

\begin{proof}
Consider a sequence \SQD{S}{p,q}
of consecutive duos preserved by a
solution $X$ of Max-Duo PSM on instance $(A, B,k)$.
Then, consider the two cases of Rule 3. In the second case, since
$C_{\bar{S}}$ contains all the consecutive duos in $\bar{S}$ where
\SQD{S}{p,q} can be mapped,
then the lemma holds.
Now, consider the first case of Rule 3. Then, by construction in $C_{\bar{S}}$ there
are $k^2+1$ non-overlapping sequences of consecutive duos in $\bar{S}$
where \SQD{S}{p,q} can be mapped.

\end{proof}

\begin{lemma}
\label{lem:bound:extStr3}
Consider a solution $X$ of Max-Duo PSM on instance $(A, B,k)$. 
Then, there exist subsets of duos $C'_A \subseteq C_A$ and $C'_B \subseteq C_B$,
such that there is a mapping of $C'_A$ into $C'_B$ and $|C'_A|=|C'_B|=k$.
\end{lemma}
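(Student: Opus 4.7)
The plan is to start from a solution $X$ (which, without loss of generality, preserves exactly $k$ duos) and modify it into a partial mapping in which every $A$-side preserved duo lies in $C_A$ and every $B$-side preserved duo lies in $C_B$; the preserved duos of this modified mapping then form the required $C'_A$ and $C'_B$.

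First, I would decompose the duos preserved by $X$ into its $s$ maximal sequences of consecutive preserved duos $\sigma_i \to \tau_i$, with $\sigma_i = \SQD{A}{p_i,q_i}$ mapped by $X$ to $\tau_i = \SQD{B}{p'_i,q'_i}$; writing $\ell_i = q_i - p_i$, we have $\sum_i \ell_i = k$ and, by Claim~\ref{claim:boundlength}, $\ell_i \leq k-1$ for every $i$. Next, for each $i$, I would apply Lemma~\ref{lem:bound:match2} to any of the duos of $\sigma_i$ together with its image in $\tau_i$ (which is preservable, being preserved by $X$): this gives that either the $A$-side duo belongs to $M_A \subseteq C_A$ or the $B$-side duo belongs to $M_B \subseteq C_B$. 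In the first case, Lemma~\ref{lem:bound:extStr} (with $S = A$) ensures that Rule~2 adds all of $\sigma_i$ to $C_A$; symmetrically, in the second case Rule~2 adds all of $\tau_i$ to $C_B$. Hence, for every preserved sequence, at least one of its two sides is entirely contained in the corresponding candidate set.

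The main obstacle is to also move the \emph{other} side of each sequence into its candidate set, without destroying the mappings of the remaining preserved sequences. Suppose $\sigma_i \subseteq C_A$ but $\tau_i \not\subseteq C_B$ (the symmetric case is analogous). Lemma~\ref{lem:bound:extStr2} supplies $k^2+1$ pairwise non-overlapping sequences of duos in $C_B$ into which $\sigma_i$ can be mapped; I would reroute the image of $\sigma_i$ to one of those sequences, chosen so as to avoid the $B$-positions currently used by the other preserved sequences. Those other sequences together occupy $\sum_{j \neq i}(\ell_j + 1) = (k - \ell_i) + (s - 1) \leq 2k - 2$ positions in $B$, and since the $k^2+1$ candidate targets are disjoint, each such position blocks at most one of them, leaving at least $k^2+1 - (2k-2) = (k-1)^2 + 2 \geq 1$ targets free. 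I would then process the $s$ sequences one at a time, maintaining the invariant that every already-processed sequence has both sides in the candidate sets; since the counting bound above refers to the positions currently used by the other sequences (whether still in their original location or already rerouted), the invariant is preserved at every step. Once all sequences have been handled, the resulting partial mapping preserves exactly $k$ duos whose $A$-positions lie in $C_A$ and whose $B$-positions lie in $C_B$, yielding the required $C'_A$ and $C'_B$.
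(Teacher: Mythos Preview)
Your argument is correct and follows the same overall strategy as the paper: use Lemma~\ref{lem:bound:match2} together with Lemma~\ref{lem:bound:extStr} to put one side of each preserved block into its candidate set, and then use Lemma~\ref{lem:bound:extStr2} plus a counting argument against the $k^2+1$ disjoint targets to relocate the other side. The paper organises the case analysis slightly differently (it splits on which of $A,B$ carries more copies of the relevant duo string) and bounds the number of blocked targets by $k\cdot k$ rather than by the number of occupied positions, but these are presentational differences; your explicit sequential rerouting with the maintained invariant is, if anything, a cleaner version of the same idea.
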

\begin{proof}

Consider a duo \DUO{S}{i}, where $S\in\{A, B\}$, $S[i]=a$
and $S[i+1]=b$, such that a solution $X$ of Max-Duo PSM on instance $(A, B,k)$
maps a sequence \SQD{S}{p,q} of consecutive duos of $S$,
with $i-k \leq p < q \leq i+k$, into a sequence of consecutive duos 
\SQD{\bar{S}}{t,u}. 
Notice that all the duos of \SQD{S}{p,q} belong to $C_S$, or all the duos of \SQD{\bar{S}}{t,u}
belong to $\bar{S}$. Indeed, assume that the duo \DUO{S}{i} is mapped into the duo 
\DUO{\bar{S}}{j} by $X$. By Lemma~\ref{lem:bound:match2}, it follows that Rule 1 adds 
duo \DUO{S}{i} to $C_S$ or \DUO{\bar{S}}{j+1} to $C_{\bar{S}}$. Moreover,
by Lemma~\ref{lem:bound:extStr}, Rule 2 adds all the duos of \SQD{S}{p,q} to $C_S$ or all the duos of
\SQD{\bar{S}}{t,u} to $\bar{S}$.
In what follows, we assume w.l.o.g. that all the duos \SQD{S}{p,q} belong to $C_S$.

Let $S$ be the string having the maximum number of occurrences of
duos whose corresponding string is $'ab'$.
Then, by Lemma~\ref{lem:bound:match2} and by Rule 1, each duo of
$\bar{S}$, whose corresponding string is $'ab'$, is added to $C_{\bar{S}}$.
Moreover, consider a sequence \SQD{S}{p,q}
of consecutive duos of $S$,
with $i-k \leq p < q \leq i+k$, mapped into a sequence
\SQD{\bar{S}}{t,u}
of consecutive duos of $\bar{S}$. By
Lemma~\ref{lem:bound:extStr}, all the duos of \SQD{\bar{S}}{t,u}
are added to $C_{\bar{S}}$ by Rule 2.
Hence, we can assume that the duos of \SQD{S}{p,q} and the duos of \SQD{\bar{S}}{t,u}
belong to $C'_A$ and $C'_B$ respectively, and there is a mapping between such duos.

Assume that is $S$ the string having the minimum number of
occurrences of $'ab'$ 
%
and consider the sequence \SQD{S}{p,q}
of consecutive duos of $X$ mapped into the sequence \SQD{\bar{S}}{t,u} of
consecutive duos. 
By Lemma~\ref{lem:bound:extStr2}, either all the consecutive duos of
$\bar{S}$ where \SQD{S}{p,q}
can be mapped are included in $C_{\bar{S}}$,
or the duos of $k^2+1$ non-overlapping sequences of consecutive
duos of $\bar{S}$, where \SQD{S}{p,q}
can be mapped, are included in $C_{\bar{S}}$.

In the former case, all the duos of \SQD{\bar{S}}{t,u} are added to $C_{\bar{S}}$ by Rule 3,
hence we can assume that the duos of \SQD{S}{p,q} and the duos of \SQD{\bar{S}}{t,u}
belong to $C'_S$ and $C'_{\bar{S}}$ respectively, and there is a mapping between such duos. 

%
%

In the latter case, 
%
we show that, even if $C_{\bar{S}}$ does not contain
all the duos of \SQD{\bar{S}}{t,u},
it contains a sequence of
consecutive duos where \SQD{S}{p,q}
can be mapped.
Notice that each sequence of consecutive duos of $\bar{S}$ mapped to 
sequence of consecutive duos of $S$  has length bounded by
$k$ (see Claim~\ref{claim:boundlength}) and, since $X$ preserves $k$ duos, the set $C'_{\bar{S}}$
contains at most $k$ of such sequences of consecutive duos.
It follows that each sequence of consecutive duos added to $C'_{\bar{S}}$ can overlap
at most $k$ non-overlapping occurrences of \SQD{S}{p,q} in $\bar{S}$. Hence the whole set
of sequences of consecutive duos preserved of $C'_{\bar{S}}$ can overlap at most $k^2$
non-overlapping sequence of consecutive duos in $\bar{S}$ where
\SQD{S}{p,q} can be mapped.
%
Since we have added to $C_{\bar{S}}$ the duos belonging to $k^2+1$
non-overlapping sequences of consecutive duos of $\bar{S}$ 
where \SQD{S}{p,q} can be mapped,
then there exists at least a sequence \SQD{\bar{S}}{w,y}
of consecutive duos in $C_{\bar{S}}$ where \SQD{S}{p,q}
can be mapped.
Hence we can assume that the duos of \SQD{S}{p,q} and the duos of \SQD{\bar{S}}{w,y}
belong to $C'_S$ and $C'_{\bar{S}}$ respectively, and there is mapping between such duos. 
\end{proof}

Now, we are able to bound the size of the sets $C_A$ and $C_B$.

\begin{lemma}
\label{lem:bound:number:extrStr2}
Given an input string $S \in \{A,B\}$, Rules 1-3 add at most $O(k^6)$ duos to each set $C_S$.
\end{lemma}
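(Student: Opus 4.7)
The plan is to bound the contributions of each of the three rules separately, since each rule adds duos through a distinct mechanism, and then sum the bounds.

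First I would handle Rule 1: by Lemma~\ref{lem:bound:match1} we may assume $|M| < 4k$, so Rule 1 contributes at most $4k = O(k)$ duos to each of $C_A$ and $C_B$. Rule 2 is then handled directly by Lemma~\ref{lem:bound:number:extrStr}, which gives the bound of $8k^2 = O(k^2)$ duos added to each set.

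The main work, and the part I expect to be the main obstacle, is bounding the contribution of Rule 3. Rule 3 adds duos to $C_{\bar{S}}$ for each sequence \SQD{S}{p,q} of length at most $k$ whose duos all lie in $C_S$ after Rules 1 and 2. I would split this into two sub-counts: (i) the number of such sequences \SQD{S}{p,q} that the rule iterates over, and (ii) the number of duos added per sequence. For (ii), each sequence \SQD{S}{p,q} has length at most $k$, so any mapped occurrence in $\bar{S}$ contributes at most $k$ duos; since Rule 3 adds at most $k^2+1$ such non-overlapping occurrences (or fewer), this yields at most $(k^2+1)\cdot k = O(k^3)$ duos per sequence \SQD{S}{p,q}.

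For (i), I would argue as follows. After Rule 1 there are fewer than $4k$ positions $i$ of $S$ inducing a duo of $C_S$, and Rule 2 expands each such position into the window \SQD{S}{i-k,i+k} of at most $2k$ consecutive duos. Hence the duos of $C_S$ form at most $4k$ intervals of consecutive positions, each containing at most $2k$ duos. Within a single such interval of width at most $2k$, the number of sub-sequences of at most $k$ consecutive duos is $O(k^2)$, so the total number of sequences \SQD{S}{p,q} considered by Rule 3 is $O(k)\cdot O(k^2) = O(k^3)$. Combining (i) and (ii), Rule 3 adds at most $O(k^3)\cdot O(k^3) = O(k^6)$ duos to each $C_{\bar{S}}$.

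Summing across the three rules gives $O(k) + O(k^2) + O(k^6) = O(k^6)$, proving the claim. The delicate point is (i): one has to be careful that sequences \SQD{S}{p,q} span at most one of the Rule 2 windows (since consecutive duos of $C_S$ all lie in the same window by construction, once $k$ and the window radius agree), which justifies counting within a single window of $O(k)$ positions rather than across the whole string.
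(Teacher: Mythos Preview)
Your overall plan --- bound each rule separately and let Rule~3 dominate via (number of sequences) $\times$ (duos added per sequence) --- is exactly the paper's approach, and your bounds for Rules~1 and~2 match.

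There is, however, a genuine gap in your estimate~(ii). You write that ``Rule~3 adds at most $k^2+1$ such non-overlapping occurrences (or fewer)'', and from this conclude $O(k^3)$ duos per sequence. This is fine for the first branch of Rule~3, but in the \emph{else} branch Rule~3 adds to $C_{\bar S}$ the duos of \emph{all} sequences of $\bar S$ where $d_S(p,q)$ can be mapped, including overlapping ones --- not just at most $k^2$ non-overlapping ones. So the count $(k^2+1)\cdot k$ is not yet justified there. The paper closes this by arguing that each of the at most $k^2$ non-overlapping occurrences can be overlapped by at most $k$ further occurrences (since each occurrence has length at most $k$), yielding an $O(k^4)$ per-sequence bound in the else branch. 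You can in fact salvage your tighter $O(k^3)$ bound by observing that the \emph{set of distinct duos} covered by all (possibly overlapping) occurrences still lies within a $O(k)$-neighbourhood of each of the $\le k^2$ non-overlapping occurrences; but some such argument is needed, and your current sentence does not supply it.

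A minor remark on~(i): your count of $O(k^3)$ sequences is correct and arguably more careful than the paper's stated $O(k^2)$, but your ``delicate point'' about a sequence lying inside a single Rule~2 window is both unnecessary and not quite right (windows may merge). The clean argument is simply: after Rules~1--2 there are $O(k^2)$ duos in $C_S$, each of which is the left endpoint of at most $k$ sequences $d_S(p,q)$ of length $\le k$, hence $O(k^3)$ sequences in total.
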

\begin{proof}
By Lemma~\ref{lem:bound:match2}, Rule 1 adds at most $4k$ duos to each set $C_S$, with $S \in \{A,B\}$.
By Lemma~\ref{lem:bound:number:extrStr}, Rule 2 adds at most $O(k^2)$
duos to each set $C_{S}$, with $S \in \{ A,B \}$. 

Rule 3 considers
$O(k^2)$ sequences \SQD{S}{p,q} of length bounded by $k$.
%
For each such sequences of consecutive duos,
Rule 3 adds duos to $C_{\bar{S}}$ in two possible ways.
In the first case $k^2+1$ sequences of consecutive duos are selected,
each one having size at most $k$, thus $O(k^3)$ duos are added to $C_{\bar{S}}$; thus
$O(k^5)$ are added to $C_{\bar{S}}$.
In the second case there exist at most $k^2$ non-overlapping sequences
of consecutive duos (each one having length at most $k$) that can preserve \SQD{S}{p,q}.
Moreover, each of the $O(k^2)$ non-overlapping sequences of consecutive duos
where \SQD{S}{p,q} can be mapped
can overlap at most $k$ sequences of consecutive duos that can preserve
\SQD{S}{p,q}. Thus, for each of such $O(k^2)$ non-overlapping sequences of consecutive duos,
$O(k^4)$ duos are added to $C_{\bar{S}}$.
%
Hence, the overall number of consecutive duos added to
$C_{\bar{S}}$ is $O(k^6)$.
\end{proof}

\subsection{Phase 2: Completing the construction}
From the sets $C_A$ and $C_B$ previously computed, we construct an instance
of Max-Duo PSM, that is two related strings $A'$ and $B'$.
Furthermore, we will show that the length of $A'$ and $B'$ is bounded by $O(k^6)$ and that the preservable
duos of $A'$ and $B'$ are those of $C_A$ and $C_B$, respectively.

Recall that $A$ and $B$ are two related strings over alphabet $\Sigma$. 
Consider the set $T_A$ of substrings of $A$ (the set $T_B$ of substrings of $B$, respectively) that induces
the duos in $C_A$ (in $C_B$, respectively) and assume that $T_S$, with $S \in \{ A,B\}$,
contains $q_S$ strings, namely $t_{1,S}, \dots t_{q_S,S}$.

Before describing the two strings
$A'$ and $B'$, we construct the alphabet $\Sigma'$ on which they are based, where:
\begin{align*}
\Sigma'=&\enskip\Sigma\enskip \cup\\
	&\enskip\{ e_{A,i}: 1 \leq i \leq q_A  \}\enskip \cup\\
	&\enskip\{ e_{B,i}: 1 \leq i \leq q_B  \}\enskip \cup\\
	&\enskip\{g_a:\text{$a \in \Sigma$ has a different number of occurrences in $T_A$ and $T_B$} \} 
\end{align*}

We concatenate the substrings of $T_A$ (of $T_B$, respectively) with symbols of
$\Sigma' \setminus \Sigma$.
%
We compute an intermediate string $P_S$, with $S \in \{ A,B\}$, as follows.
First, set $P_S=t_{1,S}\cdot e_{S,1}$. Then, for each $i$ with
$2 \leq i \leq q_S$, concatenate the string $P_S$ with string $t_{i+1,S}\cdot e_{S,i}$.
Finally, append string $e_{\bar{S},1} \dots e_{\bar{S},q_S}$ at the right
end of $P_S$.


Now, we append some strings to the right end of $P_A$ and $P_B$ in order to
compute $A'$ and $B'$ over alphabet $\Sigma'$.
More precisely, for each symbol $a \in \Sigma$ such that the number of occurrences
of $a$ in $P_A$  and in $P_B$ is different, we apply the following procedure. 
Assume w.l.o.g. that $P_A$ contains $h_{A,a}$ occurrences of symbol $a$ and 
$P_B$ contains $h_{B,a}$ occurrences of symbol $a$, with $h_{A,a} >  h_{B,a}$.
Then, we append a string $(g_a a) ^{h_{A,a} -  h_{B,a}}$ (that is the string consisting
of  the concatenation of $h_{A,a} -  h_{B,a}$ occurrences of $g_a a$) to the right end
of $P_B$, while we append the string  $(g_a) ^{h_{A,a} -  h_{B,a}}$ (that is the string
consisting of  the concatenation of $h_{A,a} -  h_{B,a}$ occurrences of $g_a$) to the
right end of $P_A$. 
Similarly, if $h_{A,a} <  h_{B,a}$, we append a string $( g_a a) ^{h_{B,a} -  h_{A,a}}$
to the right end of $P_A$, while we append the string  $(g_a) ^{h_{B,a} -  h_{A,a}}$
to the right end of $P_B$ (see Figure~\ref{fig:phase2}).

\begin{figure}[ht]
\centering
\begin{align*}
A'=&
\overbrace{t_{1,A}\cdot e_{A,1} \dots t_{q_{A},A}\cdot e_{A,q_{A}}}^{P_{A}}
e_{B,1} \dots e_{B,q_{B}}
\overbrace{g_{a} g_{a} \dots g_{a}}^{(g_{a})^{h_{A,a}-h_{B,a}}}
\overbrace{g_{b} g_{b} \dots g_{b}}^{(g_{b})^{h_{A,b}-h_{B,b}}}
\overbrace{g_{c} c g_{c} c \dots g_{c} c}^{(g_{c}c)^{h_{B,c}-h_{A,c}}}
\\
B'=&
\underbrace{t_{1,B}\cdot e_{B,1} \dots t_{q_{B},B}\cdot e_{B,q_{B}}}_{P_{B}}
e_{A,1} \dots e_{A,q_{A}}
\underbrace{g_{a} a g_{a} a \dots g_{a} a}_{(g_{a}a)^{h_{A,a}-h_{B,a}}}
\underbrace{g_{b} b g_{b} b \dots g_{b} b}_{(g_{b}b)^{h_{A,b}-h_{B,b}}}
\underbrace{g_{c} g_{c} \dots g_{c}}_{(g_{c})^{h_{B,c}-h_{A,c}}}
\end{align*}
\caption{An example of two related strings $A'$ and $B'$ constructed during
Phase 2 from sets $C_{A}$ and $C_{B}$ (and sets $T_A$and $T_B$ of substrings),
respectively.
We assume that the string $T_A$ contains more occurrences of symbol $a$, $b$
(denoted by $h_{A,a}$ and $h_{A,b}$, respectively) than $T_B$ (denoted by
$h_{B,a}$ and $h_{B,b}$, respectively), while string $T_B$ contains more
occurrences of symbol $c$ (denoted by $h_{B,c}$) than $T_A$ (denoted by
$h_{A,c}$).
Hence, strings $g_{a}a \dots g_{a}a $ and $g_{b}b \dots g_{b}b$ are appended
to construct $A'$, while strings $g_{a} \dots g_{a}$ and $g_{b} \dots g_{b}$
are appended to construct $B'$; string $g_{c}c \dots g_{c}c$ is appended to
construct $B'$, while string  $g_{c} \dots g_{c}c $ is appended to construct
$A'$.
}
\label{fig:phase2}
\end{figure}

The following lemma guarantees that the instance built on $A'$ and $B'$,
is an instance of Max-Duo PSM.

\begin{lemma}
\label{lem:kernelPadding1}
Let $A'$ and $B'$ be the two strings computed starting from $T_A$ and $T_B$,
respectively.
Then, $A'$ and $B'$ are related.
Moreover, $|A'|$ and $|B'|$ are bounded by $O(k^6)$.
\end{lemma}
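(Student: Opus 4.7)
The plan is to verify the two assertions separately: that every symbol of $\Sigma'$ occurs the same number of times in $A'$ as in $B'$, and that $|A'|=|B'|=O(k^6)$. For the first assertion, I would partition $\Sigma'$ into three kinds of symbols and check each in turn. Each separator $e_{A,i}$ appears once in $P_A$ between two consecutive $t_{j,A}$s and once in the block $e_{A,1}\dots e_{A,q_A}$ appended to the right end of $P_B$; the analogous statement holds for each $e_{B,i}$. For every $a \in \Sigma$ whose number of occurrences already agrees in $P_A$ and $P_B$, neither $A'$ nor $B'$ receives any further copies of $a$ and no symbol $g_a$ is introduced at all. For every $a \in \Sigma$ with $h_{A,a}\neq h_{B,a}$, say $h_{A,a}>h_{B,a}$, Phase~2 appends the block $(g_a a)^{h_{A,a}-h_{B,a}}$ to $P_B$ and the block $(g_a)^{h_{A,a}-h_{B,a}}$ to $P_A$, so the count of $a$ becomes $h_{A,a}$ on both sides while the count of $g_a$ becomes $h_{A,a}-h_{B,a}$ on both sides.

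For the length bound, I would start from Lemma~\ref{lem:bound:number:extrStr2}, which gives $|C_A|,|C_B|=O(k^6)$. Writing $T_S=\{t_{1,S},\dots,t_{q_S,S}\}$, each $t_{i,S}$ induces $|t_{i,S}|-1$ duos of $C_S$ and these contributions are disjoint, so $\sum_i(|t_{i,S}|-1)\le |C_S|$; in particular $q_S\le |C_S|$ and $\sum_i |t_{i,S}|\le 2|C_S|$. Hence $|P_S|\le \sum_i |t_{i,S}|+q_S=O(k^6)$. The block of $e_{\bar S,\cdot}$ separators appended to $P_S$ contributes a further $q_{\bar S}=O(k^6)$ characters, and the padding for unbalanced symbols adds at most $2\sum_{a\in\Sigma}|h_{A,a}-h_{B,a}|\le 2(|P_A|+|P_B|)=O(k^6)$ more. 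Summing, $|A'|=|B'|=O(k^6)$.

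The argument is essentially bookkeeping, and I expect no combinatorial obstacle. The point that requires the most care is ensuring that the padding step is well-defined: the alphabet $\Sigma'$ contains $g_a$ exactly for the symbols $a$ whose multiplicities in $P_A$ and $P_B$ differ, and the paired padding blocks, of the forms $g_a a$ versus $g_a$, are designed so that appending them restores the balance in the unbalanced symbol $a$ without disturbing the balance of any other symbol already present.
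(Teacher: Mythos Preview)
Your proposal is correct and follows essentially the same approach as the paper: you verify relatedness by checking that each class of symbols in $\Sigma'$ (the separators $e_{S,i}$, the balanced symbols of $\Sigma$, and the unbalanced ones together with their $g_a$) has equal multiplicity on both sides, and you derive the $O(k^6)$ length bound from Lemma~\ref{lem:bound:number:extrStr2}. Your argument is in fact more explicit than the paper's terse proof---in particular, your derivation that $q_S\le|C_S|$ and $\sum_i|t_{i,S}|\le 2|C_S|$ makes precise a step the paper leaves implicit.
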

\begin{proof}
Notice that each symbol in $\Sigma' \setminus \Sigma$ by construction has the
same number of occurrences in $A'$ and $B'$.
Moreover, for each symbol in $a \in \Sigma$, both $A'$ and $B'$ contain by construction
$\max(h_{A,a},h_{B,a})$ occurrences of $a$.

From Lemma~\ref{lem:bound:number:extrStr2} it follows that $|C_A|$ and $|C_B|$ are bounded by $O(k^6)$,
hence the symbols $e_{S,i}$, with $1\leq i\leq q_S$ (where $S\in \{A,B\}$),
inserted in $A'$ and $B'$ are at most $O(k^6)$.
Similarly, the number of symbols $g_a \notin \Sigma$ and $a \in \Sigma$ inserted are at most
$O(k^6)$, since in the worst case at most $|C_A|$ ($|C_B|$, respectively) symbols of $\Sigma$
are inserted in $A'$ (in $B'$, respectively), and for each of such symbols, exactly one occurrence
of some symbol $g_a$ is inserted in $A'$ and $B'$. 
\end{proof}

Finally, the following property holds.

\begin{lemma}
\label{lem:kernelPadding2}
Let $X$ be a solution of Max-Duo PSM on instance $(A',B',k)$, 
such that 
$X$ maps \DUO{A'}{i} into \DUO{B'}{j}.
Then 
\DUO{A'}{i} and \DUO{B'}{j} are duos of $C_A$ and $C_B$, respectively.
\end{lemma}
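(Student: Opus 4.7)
The plan is to reduce the statement to a structural observation about where two consecutive symbols of $\Sigma$ can occur in $A'$ and $B'$. By construction, inside $P_A$ any two consecutive positions inside the same block $t_{h,A}$ carry symbols of $\Sigma$, but consecutive blocks are separated by a symbol $e_{A,h}\notin\Sigma$; the appended portion $e_{B,1}\dots e_{B,q_B}$ is entirely outside $\Sigma$; and in every padding suffix either the $g$-symbols are isolated (as $(g_a)^{\cdot}$) or they alternate with a single symbol of $\Sigma$ (as $(g_a a)^{\cdot}$), so in neither case are two symbols of $\Sigma$ adjacent. Consequently, in $A'$ the duos \DUO{A'}{i} with both symbols in $\Sigma$ are exactly the duos induced by some $t_{h,A}$, that is, the duos of $C_A$, and the analogous fact holds for $B'$ and $C_B$. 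It therefore suffices to show that any duo of $A'$ involving a symbol of $\Sigma'\setminus\Sigma$ cannot be preserved by $X$.

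To prove this, I would case-split on the non-$\Sigma$ symbol. For a separator $e_{A,h}$: it occurs exactly once in $A'$ (within $P_A$) and exactly once in $B'$ (in the appended block $e_{A,1}\dots e_{A,q_A}$), so any mapping sends its unique position in $A'$ to its unique position in $B'$. Comparing the neighbors shows they always disagree: inside $P_A$, $e_{A,h}$ is flanked by symbols that belong either to $\Sigma$ (the ends of $t_{h,A}$ and $t_{h+1,A}$) or, at the right end $h=q_A$, by the first $e_{B,\cdot}$ of the appended block; inside $B'$, $e_{A,h}$ is flanked by neighboring $e_{A,\cdot}$ symbols or by the boundary with $P_B$ or the padding. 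These neighborhoods contain disjoint symbol sets, so no duo touching $e_{A,h}$ can be preserved; the argument for $e_{B,h}$ is symmetric. For a gadget symbol $g_a$, the construction makes exactly one of $A'$ and $B'$ contain $g_a$ only next to other $g$-symbols (or next to the end of the appended $e$-block), while the other contains $g_a$ always adjacent to $a$; consequently the neighbor of $g_a$ in $A'$ never matches the neighbor of the corresponding occurrence in $B'$, and again no duo involving $g_a$ can be preserved.

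Combining the two steps, every duo preserved by $X$ consists of two symbols of $\Sigma$, and therefore sits inside some $t_{h,A}$ in $A'$ and some $t_{h',B}$ in $B'$; by the definitions of $T_A$ and $T_B$ this means \DUO{A'}{i} corresponds to a duo of $C_A$ and \DUO{B'}{j} to a duo of $C_B$, as claimed. The main bookkeeping obstacle will be the $g_a$ case at the interfaces between padding blocks of different symbols and at the join between the $e$-block and the padding, where the two cases $h_{A,a}>h_{B,a}$ and $h_{A,a}<h_{B,a}$ have to be treated separately; in every sub-case, however, the asymmetry between the two strings forbids preservation and the verification is routine.
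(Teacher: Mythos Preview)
Your proposal is correct and follows essentially the same approach as the paper's proof: both argue that any duo of $A'$ or $B'$ not coming from $C_A$ or $C_B$ must contain a symbol of $\Sigma'\setminus\Sigma$, and then case-split on whether that symbol is a separator $e_{S,h}$ or a gadget symbol $g_a$, checking in each case that the neighborhoods in $A'$ and $B'$ cannot match. Your write-up is somewhat more explicit about the boundary cases (interfaces between padding blocks and the join with the $e$-block) and about the uniqueness of each $e_{S,h}$, but the structure of the argument is identical to the paper's.
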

\begin{proof}
The lemma follows from the fact that, by construction, each preservable duo of
$A'$ and $B'$ belongs to strings in $T_A$ and $T_B$, respectively. 
Indeed, consider w.l.o.g.~a duo \DUO{S'}{i}  of $S'$, with $S' \in \{ A', B' \}$,
not in $C_S$. 
Then, by construction of $S'$, \DUO{S'}{i}  must include at least a symbol $x$
in $\Sigma' \setminus \Sigma$, as each new occurrence of a symbol in $\Sigma$ 
is adjacent to a symbol in $\Sigma' \setminus \Sigma$. 
Now, notice that if $x$ is equal to $e_{S,j_S}$, with $1 \leq j \leq q_S$, then by construction it belongs
to duos with different symbols in $S'$ and $\bar{S}'$, as $x$ is adjacent in
$\bar{S}'$ only to symbol in  $\Sigma' \setminus \Sigma$, while in $S$ is adjacent
only to symbols in $\Sigma$, with the exception of $e_{S,q_S}$, which again by
construction cannot belong to a preservable duo.

Now, consider a symbol $g_y \in \Sigma' \setminus \Sigma$.
Then exactly one of $A'$, $B'$ contains a duo whose corresponding string is $y g_y$ and exactly one of $A'$, $B'$
contains a duo whose corresponding string is $g_y g_y$.
Moreover, exactly one of $A'$, $B'$ contains a duo whose corresponding string is $z g_y$ and exactly one  
of $A'$, $B'$ contains a duo whose corresponding string is $g_z g_y$, with $z$ a symbol in $\Sigma$, with the exception
of the first symbol appended to $P_A$ or $P_B$, which again by construction
cannot belong to a preservable duo.
%
%
%
%
%
\end{proof}

We conclude the description of the kernelization algorithm with the following theorem.

\begin{teo}
\label{teo:Kern}
Given an instance $(A,B,k)$ of Max-Duo PSM, Phase 1 and Phase 2 compute in time $O(n^{\frac{5}{2}} k^6)$ 
an instance $(A',B',k)$, with $|A'|$ and $|B'|$ bounded by $O(k^6)$,
such that there exists a solution of Max-Duo PSM on instance $(A,B,k)$ if and only if there exists a solution
of Max-Duo PSM on instance $(A',B',k)$.
\end{teo}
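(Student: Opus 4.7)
The plan is to split the argument into three independent parts and then combine them: the size bound, the running time, and the logical equivalence between the original and kernelized instances.

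The size bound is immediate from Lemma~\ref{lem:kernelPadding1}, so the first part of the proof would simply invoke it. For the running time, I would analyze the three rules of Phase~1 and then the construction of Phase~2. Rule~1 requires constructing the bipartite graph $G$ in time $O(n^2)$ and computing a maximum matching by Hopcroft--Karp, which takes $O(n^{5/2})$; if $|M| \geq 4k$ we already have a feasible solution by Lemma~\ref{lem:bound:match1}, so we can assume $|M| < 4k$. Rule~2 processes at most $4k$ vertices of $M$, each adding $O(k)$ duos, for a total of $O(k^2)$ work. Rule~3 considers $O(k^2)$ candidate sequences of length $\leq k$; for each of them we scan $\bar S$ to locate up to $k^2+1$ matching occurrences, which is polynomial in $n$ and $k$. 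Phase~2 concatenates strings of total length $O(k^6)$ and hence runs in $O(k^6)$ time. The dominant cost is the maximum matching step combined with the final output size, giving the claimed $O(n^{5/2} k^6)$ bound.

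For correctness I would handle the two directions separately. For the forward direction, assuming a solution $X$ of Max-Duo PSM on $(A,B,k)$, Lemma~\ref{lem:bound:extStr3} provides $C'_A \subseteq C_A$ and $C'_B \subseteq C_B$ with $|C'_A|=|C'_B|=k$ together with a mapping of $C'_A$ into $C'_B$. By the construction of Phase~2, every duo in $C_A$ survives as a duo of $A'$ (occurrences inside the substrings $t_{i,A}$) and every duo in $C_B$ survives as a duo of $B'$; I would lift the partial mapping on $C'_A,C'_B$ to a full mapping of $A'$ into $B'$ by pairing the separator symbols $e_{S,i}$ and the padding symbols $g_a$ according to their unique positions, which is possible because $A'$ and $B'$ are related. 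This yields a mapping on $(A',B',k)$ preserving $k$ duos. For the backward direction, given a solution $Y$ of $(A',B',k)$, Lemma~\ref{lem:kernelPadding2} guarantees that every preserved duo of $Y$ lies in $C_A \times C_B$; since the duos in $C_A$ (resp.~$C_B$) come from actual positions in $A$ (resp.~$B$), the $k$ preserved pairs directly define a partial mapping of $A$ into $B$ preserving $k$ duos, which extends arbitrarily to a full mapping since $A,B$ are related.

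The main obstacle I anticipate is the forward direction: the lemmas only guarantee the \emph{existence} of a mapping between duo sets $C'_A$ and $C'_B$, but Max-Duo PSM requires a bijection on \emph{positions} of related strings. The care is in checking that the separator and padding symbols introduced in Phase~2 do not obstruct this extension — each $e_{S,i}$ occurs exactly once in each of $A'$ and $B'$, and the counts of $g_a$ and $a$ have been balanced so that the unpaired occurrences of symbols from $\Sigma$ outside $T_A \cup T_B$ can be matched one-for-one — and that no such matching accidentally creates preserved duos outside $C_A \cup C_B$ that would inflate the count, a point already controlled by Lemma~\ref{lem:kernelPadding2}. Once these bookkeeping points are verified, the equivalence follows and the theorem is established.
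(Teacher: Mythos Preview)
Your proposal is correct and follows essentially the same route as the paper: size via Lemma~\ref{lem:kernelPadding1}, running time by costing Rule~1 (graph plus Hopcroft--Karp matching), Rule~2, Rule~3, and Phase~2 separately, and correctness via Lemma~\ref{lem:bound:extStr3} for the $(A,B,k)\Rightarrow(A',B',k)$ direction and Lemma~\ref{lem:kernelPadding2} for the converse. The paper's own proof is actually terser than yours on the correctness part---it does not spell out the extension from a duo mapping to a full position bijection---so your discussion of the separator and padding symbols is a welcome elaboration rather than a deviation.
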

\begin{proof}
Notice that by Lemma~\ref{lem:kernelPadding1}, $|A'|$ and $|B'|$ are bounded by $O(k^6)$.
First, we show that there exists a solution of Max-Duo PSM on instance $(A,B,k)$ if and only if there exists a solution
of Max-Duo PSM on instance $(A',B',k)$.

($\Rightarrow$) Consider a solution of Max-Duo PSM on instance $(A',B',k)$, then a solution Max-Duo PSM on instance $(A,B,k)$
can be computed by preserving those duos of $(A,B,k)$ corresponding to the duos preserved by the solution
of Max-Duo PSM on instance $(A',B',k)$.

($\Leftarrow$) By Lemma~\ref{lem:bound:extStr3} and by Lemma~\ref{lem:kernelPadding2}, 
if there exists a solution of of Max-Duo PSM on instance $(A,B,k)$, then there exists a solution
of of Max-Duo PSM on instance $(A',B',k)$.

Now, we show that Phase 1 and Phase 2 compute instance $(A',B',k)$ in time $O(n^{\frac{5}{2}} k^6)$.
Rule 1 requires time $O(n^2)$ to compute the graph $G$ and $O(n^{\frac{5}{2}})$ to compute 
a maximum bipartite matching of $G$~\cite{DBLP:journals/siamcomp/HopcroftK73}.
Rule 2 requires time $O(k^2)$ to add to $C_A$ and $C_B$, for each of the at most $O(k)$ duos of $M_A$ and $M_B$, the $O(k)$ duos
of \SQD{S}{i-k,i+k}.
Rule 3 requires time $O(k^6 n)$ to add to $C_A$ and $C_B$, for each duo added by Rule 1-2 (which are at most $O(k^2)$), 
the sequences \SQD{\bar{S}}{p,q} (which are at most $O(k^3)$).

Finally, Phase 2 computes $A'$ and $B'$ in time $O(k^6)$. Indeed, $P_S$ can be computed in time $O(k^6)$ by concatenating
a set of $O(k^6)$ strings. Moreover, we can count the number of occurrences of symbols of $\Sigma$ in a string
$P_S \in \{ A,B \}$ in time $O(k^6)$, and we append $O(k^6)$ strings $g_a a$, $g_a$ in time $O(k^6)$.
\end{proof}

\section{Conclusion}

In this paper, we have investigated the parameterized complexity of the
Max-Duo PSM problem, by first giving a parameterized algorithm based
on color-coding and then showing that it admits a kernel of size
$O(k^6)$.
From a paramterized complexity point of view, there are some interesting open problems for Max-Duo PSM. 
First, following the approach of \emph{parameterizing above a guaranteed value}, 
it would be interesting to investigate the parameterized complexity of the problem when the parameter is the number
of conserved duos minus the conserved duos induced by the submatching returned by the approximation algorithm in~\cite{DBLP:conf/wabi/BoriaKLM14}.
Furthermore, it would be interesting to improve upon the time (and space) complexity
of our color-coding based algorithm. In particular, notice that our color-coding based algorithm requires exponential space complexity, 
as it makes use of two tables of size $O(2^k k n)$.


%

\bibliographystyle{splncs03}
\bibliography{biblio}

\end{document}